\newtheorem{prop}{Proposition}
\theoremstyle{definition}
\newtheorem{rmk}{Remark}
\newcommand{\N}{\mathbb{N}}
\newcommand{\R}{\mathbb{R}}
\newcommand{\F}{\mathcal{F}}
\newcommand{\E}{\mathbb{E}}
\newcommand{\Prob}{\mathbb{P}}
\newcommand{\Q}{\mathbb{Q}}
\title{Inference and rare event simulation for stopped Markov processes via reverse-time sequential Monte Carlo}
\author{Jere Koskela \\
	\texttt{koskela@math.tu-berlin.de}\\
	\small Institut f\"ur Mathematik \\
	\small Technische Universit\"at Berlin \\
	\small Stra{\ss}e des 17.~Juni 136, 10623 Berlin, Germany
	\and
	Dario Span\`{o} \\
	\texttt{d.spano@warwick.ac.uk}\\
	\small Department of Statistics \\
	\small University of Warwick \\
	\small Coventry, CV4 7AL, UK
	\and
	Paul A. Jenkins \\
	\texttt{p.jenkins@warwick.ac.uk}\\
	\small Departments of Statistics \& Computer Science\\
	\small University of Warwick \\
	\small Coventry, CV4 7AL, UK
}
\date{\today}
\begin{document}

\maketitle

\begin{abstract}
We present a sequential Monte Carlo algorithm for Markov chain trajectories with proposals constructed in reverse time, which is advantageous when paths are conditioned to end in a rare set.
The reverse time proposal distribution is constructed by approximating the ratio of Green's functions in Nagasawa's formula.
Conditioning arguments can be used to interpret these ratios as low-dimensional conditional sampling distributions of some coordinates of the process given the others.
Hence the difficulty in designing SMC proposals in high dimension is greatly reduced.
We illustrate our method on estimating an overflow probability in a queueing model, the probability that a diffusion follows a narrowing corridor, and the initial location of an infection in an epidemic model on a network.
\vskip 11pt
{\bf Keywords:} intractable likelihood, rare event simulation, sequential Monte Carlo, stopped Markov process, time-reversal
\end{abstract}

\section{Introduction}
\label{introduction}
Sequential Monte Carlo (SMC) is a very general technique for sampling from a sequence of complicated distributions of increasing dimension and known pointwise up to a normalising constant.
For a comprehensive introduction, see e.g.~\citep{Doucet01, Liu01, dou:joh:2011}. 
Briefly, a ``cloud'' of weighted particles is extended from one distribution to the next by a combination of sequential importance sampling and resampling. 
Each set of weighted particles then forms an empirical approximation of each subsequent distribution, provided adjacent distributions are sufficiently similar. 
This is typically the case when interest is in inference from a sequence of observations from e.g.~a hidden Markov model, and thus SMC finds widespread use in the filtering literature \citep{Doucet01, Liu01, DelMoral04, fea:2008, dou:joh:2011}.

In this paper we present a SMC algorithm for sampling trajectories of Markov chains started at the last exit time of an initial set $I$, and killed at the first hitting time of a terminal set $T$.
By last exit time, we mean that trajectories of interest are not allowed to re-enter $I$ before hitting $T$.
Such Markov trajectories have a wide range of applications, such as population genetics \citep{Griffiths94c, Stephens00, DeIorio04a, DeIorio04b, Birkner08, Hobolth08, Griffiths08, Birkner11, Koskela15}, mathematical finance \citep{Casella08}, neuroscience \citep{Bibbona13}, physics \citep{DelMoral05, Johansen06}, and engineering \citep{Blom07, Lezaud10}.
This problem is non-standard in SMC \citep{fea:2008} because the dimension of a particle can be random, and the intermediate target distribution of interest, namely that of a partially reconstructed chain conditioned on its eventual hitting of $T$ before returning to $I$, is usually unavailable. 
In practice the intermediate target may be replaced with its unconditioned counterpart, in which case the current importance weight of a particle is in very poor correlation with its final weight.
This renders the resampling step of a SMC algorithm counterproductive.

To circumvent this problem, our algorithm consists of proposing trajectories in \emph{reverse} time, started from $T$ and propagated until the first hitting time of $I$. 
The distribution of the reverse-time process is typically also intractable, but SMC techniques are used to mitigate the aforementioned problems, and produce properly weighted particle ensembles for unbiased estimators.

In practice, the utility of our method is in approximating quantities of the form $\E_{ \mu }[ f( X ) ]$, where $f$ is an integrable function, $X$ a Markov trajectory connecting $I$ to $T$ without returning to $I$, and $\mu$ is an initial distribution placing full mass on $I$.
Dependency of $f$ on the hitting time of $T$ can also be incorporated without difficulty, as will be seen in later sections.
The method is fairly general, but we can expect it to be particularly efficient in contexts where
\begin{itemize}
\item [(i)] the function $f$ depends only on the terminal point in $T$, or a small length of trajectory preceding it,
\item [(ii)] the support of $f|_T$ is small in terms of dimension and/or volume, while $I$ is high dimensional and/or has large volume,
\item [(iii)] the majority of the contribution to $\E_{ \mu }[ f( X ) ]$ arises from a region of low conditional probability given that the chain has hit $T$, and
\item [(iv)] the process of interest is high-dimensional and transitions only alter a small number of components at a time.
\end{itemize}

Properties (i) and (ii) ensure that time-reversal is an effective strategy.
In the extreme case where $f$ is the indicator function of a singleton in $T$ (corresponding to estimation of a conditional hitting density), $I$ is a set which is hit by the reverse-time dynamics in finite time with probability 1, and $T$ is a reverse-time entrance boundary, the optimal proposal distribution leading to zero variance estimators is the unconditional time-reversal.
These conditions are very restrictive, but the fact that all of the examples in Section \ref{examples} violate at least one of them demonstrates that reverse-time proposals can still yield efficient algorithms under the milder conditions (i)-(iv).

Property (iii) is helpful in ensuring that $T$ acts like a reverse-time entrance boundary with high probability, as proposal trajectories will naturally be pushed away from the rare hitting point of $T$ and back towards $I$.
Property (iv) means that it is only necessary to come up with a proposal distribution for the coordinates which differ between transitions, given the value of all other coordinates.
This dimensionality reduction can greatly reduce the difficulty of designing proposal distributions in high dimension.
Proposition \ref{green_prop} in Section \ref{framework} provides a precise formulation, and Sections \ref{atm} and \ref{sis} contain concrete examples.

Our method is reminiscent of existing forward-in-time SMC strategies known as multilevel SMC \citep{DelMoral04, Chen05, Jasra14}, against which our method should be considered as a complementary rather than an alternative approach.
Indeed, forward-in-time multilevel methods are well-suited for contexts involving the opposite to the point (ii) above, i.e.~an initial set $I$ of small volume and/or dimension, and a function $f$ with support in $T$ of large volume and/or dimension.

While time-reversal has proved to be a successful tool for inference in population genetics \citep{Griffiths94c, Stephens00, DeIorio04a, DeIorio04b, Birkner08, Griffiths08, Hobolth08, Birkner11, Koskela15}, and has also been used in rare event simulation \citep{Frater89, Anantharam90, Frater90, Shwartz93} and physics \citep{Jarzynski06}, its use in combination with SMC has been limited to a few examples in population genetics \citep{Chen05, Jenkins12, Koskela15}. 
It is in our opinion somewhat surprising that such an approach has not been combined with SMC for more general inference. 
\citet{Lin10} use an auxiliary distribution operating in reverse time within SMC, but only to improve resampling, not to construct trajectories.
Their time-reversal is also not constructed as an approximation of the reverse-time dynamics of their model.
\citet{Jasra14} use a coalescent-based example from population genetics to motivate their work, and highlight it as an example of more general SMC in reverse time, but still formulate their results forwards in time.

The examples presented in this paper are diverse: we consider an overflow probability in a queueing system, the probability that a diffusion trajectory remains contained in a narrow strip, and the initial location of an infection in an epidemic model on a network.
These examples demonstrate the utility of reverse time proposals beyond the coalescent setting.
The third example in particular would be challenging to solve with a forwards-in-time algorithm, as it is difficult to design a function whose level sets connect the initial condition to the desired rare event in a natural way (see \eqref{reaction_coord} in the next section).
Such a function is crucial for the implementation of forwards-in-time multilevel SMC algorithms, but not needed in our reverse time method.

The rest of the paper is laid out as follows.
In Section \ref{preliminaries} we formulate the rare event problem of interest and review forwards-in-time SMC.
In Section \ref{framework} we present our reverse-time algorithm, and show how it naturally yields a dimension reduction in the SMC design task.
Section \ref{examples} contains our example simulations, and Section \ref{discussion} concludes with a discussion.

\section{Problem formulation and SMC}\label{preliminaries}

Consider the canonical probabilty space
\begin{equation}\label{mc}
\left( \Omega := \prod_{n=0}^{\infty} E_n, \ \F := \bigotimes_{n=0}^{\infty} \F_n , \ \{ X_n \}_{ n = 0 }^{ \infty }, \ \Prob_{\mu} \right),
\end{equation}
where $\Prob_{\mu}$ is defined via its finite dimensional distributions as
\begin{equation}\label{fw_prob}
\Prob_{\mu}( X_{0:n} \in dx_{0:n} ) = \mu(dx_0) \prod_{ i = 0 }^{ n - 1 } P( x_i, dx_{ i + 1 } ).
\end{equation}
Here $x_{0:n} := ( x_0, \ldots, x_n )$, and $P$ is a given transition kernel.
We assume both $P$ and $\mu$ can be evaluated pointwise, but do not assume \eqref{mc} is stationary or even has a stationary distribution.
Following the convention of \cite{DelMoral04} (see page 51 in particular), we will also refer to \eqref{mc} as the canonical Markov chain.

As in the previous section, let $I \subset \N \times \Omega$ be an initial set, and $T \subset \N \times \Omega$ be a space-time target set, which we assume has finite expected hitting time under the dynamics of $\{ n, X_n \}_{ n = 0 }^{ \infty }$.
We assume that $\mu( I ) = 1$, and are interested in approximating functionals of trajectories
\begin{equation}\label{functional}
\E_{ \mu }[ f( \tau_T, X_{ 0 : \tau_T } ) | \tau_T < \tau_I ]
\end{equation}
for integrable functions $f$, where for a set $A \in \mathcal{ B }( \N ) \times \mathcal{ B }( \Omega )$,
\begin{equation*}
\tau_A := \inf\{ n > 0 : ( n, X_n ) \in A \}
\end{equation*}
denotes the hitting time of $A$, and $\E_{ \mu }$ denotes expectation with respect to $\Prob_{ \mu }$.
We emphasize that these trajectories are defined between the \emph{last exit time} of $I$ and the \emph{hitting time} of $T$.
In particular, in our problem formulation trajectories of interest cannot re-enter $I$ before hitting $T$.
As an example, take $T$ to depend only on space, and consider the hitting probability (resp.~density) of a point $x \in T$ whenever $\Omega$ is discrete (resp.~continuous), before any other point of $T$.
The corresponding functional is
\begin{equation*}
\E_{ \mu }[ f( \tau_T, X_{ 0 : \tau_T } ) | \tau_T < \tau_I ] = \E_{ \mu }[ \mathds{ 1 }_{ \{ x \} }( X_{ \tau_T } ) | \tau_T < \tau_I ].
\end{equation*}
As per point (iii) in Section \ref{introduction}, we assume that $f$ places most of its mass on trajectories with terminal states that are unlikely under the dynamics of \eqref{mc}.
Thus, approximating \eqref{functional} can be seen as a rare event simulation problem of sampling trajectories with unlikely terminal states.

Sequential Monte Carlo is a standard method in rare event simulation \citep{Rubino09}, and consists of constructing multiple trajectories in parallel by alternating between sequential importance sampling and resampling steps.
Sequential importance sampling consists of sampling from a sequence of proposal distributions to build up a single, high-dimensional realisation.
The proposals are typically not the conditional distributions of the model of interest, and so samples must be reweighted by the Radon-Nikodym derivative of the model and the proposal:
\begin{align*}
\E_{ \mu }[ f( X_{ 0 : \tau_T } ) ] &= \int f( x_{ 0 : \tau_T } ) \Prob_{\mu}( dx_{0 : \tau_T } ) = \int f( x_{ 0 : \tau_T } ) \frac{ d\Prob_{\mu} }{ d \Q_{ \eta } }( x_{ 0 : \tau_T } ) \Q_{ \eta }( dx_{0 : \tau_T } ) \\
&= \int f( x_{ 0 : \tau_T } ) \frac{ \mu( x_0 ) }{ \eta( x_0 ) } \prod_{ n = 0 }^{ \tau_T - 1} \frac{ P( x_n, x_{ n + 1 } ) }{ Q( x_n, x_{ n + 1 } ) } Q( x_n, dx_{ n + 1 } ).
\end{align*}
Here $\Q_{ \eta }$ is a proposal distribution with initial law $\eta$ and one-step transition kernel $Q$, and with $\Prob_{ \mu } \ll \Q_{ \eta }$.
The resampling step is applied at intermediate times $t < \tau_T$, and consists of duplicating promising particles with high weight $\frac{ \mu( x_0 ) }{ \eta( x_0 ) } \prod_{ n = 0 }^t \frac{ P( x_n, x_{ n + 1 } ) }{ Q( x_n, x_{ n + 1 } ) }$, while discarding particles with low weight.

Good choices of $\Q_{ \eta }$ and resampling schedule can dramatically reduce the variance of estimators, achieving the same asymptotic efficiency as the popular multilevel splitting method of rare event simulation under mild conditions \cite{Cerou11}.
On the other hand, poor choices of $\Q_{ \eta }$ can yield estimators with higher variance than naive Monte Carlo \citep{Glasserman97}.
The optimal proposal is the law of the process conditioned on the event of interest, but computing this law involves the quantity of interest so the optimal algorithm is unimplementable in practice.
The typical approach is to approximate the optimal proposal distribution using large deviations \citep{Sadowsky90}.

Without resampling, the variance of estimators typically increases exponentially in the number of sequential steps even with well designed proposal distributions \citep{Doucet01, Liu01}.
In the context of stopped processes, resampling should take into account the weight of the particle \emph{and} the progress it has made towards the target set.
This is achieved by introducing a sequence of intermediate sets, propagating all samples until they hit the next set, and perform resampling based on current weights once all particles have been stopped.
This has been alternatively termed multilevel SMC or stopping time resampling in \citep[Section 12.2]{DelMoral04} and \citep{Chen05}, respectively.
The good performance of our algorithm also relies on resampling, and hence should be regarded as reverse-time multilevel SMC.

Like the proposal distribution, the choice of resampling schedule also has a strong impact on the efficiency of the SMC algorithm.
Few theoretical guidelines are available, though developments have been made in determining good schedules adaptively both for multilevel splitting \citep{Cerou12} and SMC \citep{Jasra14}.
As with all multilevel splitting algorithms, the method of \citep{Cerou12} requires a reaction coordinate: a tractable function 
\begin{equation}\label{reaction_coord}
\Psi : \Omega \mapsto \R,
\end{equation}
such that both the rare event being estimated, as well as intermediate resampling stages, can be expressed as nested level sets of $\Psi$.
The interested reader is directed to e.g.~Section 2 of \citep{Cerou12} for details.
Our method has the advantage of not requiring such a function.
The examples in Section \ref{examples} will show that effective reaction coordinates are not always available, meaning that the advantage of not requiring one can be substantial.
The particle MCMC method of \citet{Jasra14} for designing efficient resampling schedules adaptively applies in our setting.

In summary, the main result of this paper is a general way to specify a SMC algorithm with a proposal distribution that constructs trajectories in reverse time.
We will show how such proposals can be specified in very general settings, and that they yield good performance when combined with stopping time resampling.
While performing stopping time resampling is crucial for the performance of our algorithm, we do not investigate optimisation of the resampling schedule.
Our results apply to settings in which the conditions (i)-(iv) in Section 1 hold, and all three of the initial law $\mu$, the forward transition density $P$ and the reverse time proposal $Q$ can be evaluated pointwise up to normalising constants.

\section{Time-reversal as a SMC proposal distribution}\label{framework}

In this section we review some relevant facts about time-reversal of Markov chains and introduce our generic SMC proposal distribution.
Concrete examples can be found in Section \ref{examples}.

We define the time-reversal of \eqref{mc} by extending the chain to the negative time-axis, and letting
\begin{equation*}
\left( \prod_{ n = 0 }^{ - \infty } E_n, \ \bigotimes_{ n = 0}^{ - \infty } \F_n ,\  \left\{ \widetilde{X}_n \right\}_{ n = 0 }^{ - \infty },  \ \widetilde{\Prob}_{\nu} \right)
\end{equation*}
denote the reverse-time chain.
Note that the initial time-indices are set to 0 by convention, and are not necessarily intended to coincide with the starting time of \eqref{mc}.
The law $\widetilde{\Prob}_{\nu}$ is again defined via its finite dimensional distributions as
\begin{equation*}
\widetilde{\Prob}_{\nu}( dx_{ 0 : - n } ) = \nu(dx_0) \prod_{ i = 0 }^{ - n + 1 } \widetilde{P}( x_i, dx_{ i - 1 } ).
\end{equation*}
Reverse time proposal distributions akin to \eqref{nagasawa} have been studied previously by \citet{Birkner11} for certain population genetic models.
The reverse transition kernel is related to its forward counterparts via Nagasawa's formula \citep[Section VI.46]{Rogers94}:
\begin{equation}\label{nagasawa}
\widetilde{P}( x_i, x_{ i - 1 } ) = \frac{ G( \mu,  x_{ i - 1 } ) }{ G( \mu, x_i ) } P( x_{ i - 1 }, x_i ),
\end{equation}
where for a measurable set $A$,
\begin{equation*}
G( \mu, A ) := \E_\mu\left[ \sum_{ i = 0 }^{ \tau_T } \mathds{ 1 }_A( X_i ) \right]
\end{equation*}
is the Green's function of \eqref{mc}, and $\E_{\mu}$ denotes expectation with respect to $\Prob_{\mu}$.
When $A = \{ z \}$ is a null set and the state space is continuous, we interpret the Green's function $G( \mu, z )$ as a density, which is assumed to exist.
For simplicity we assume that all the transition kernels (resp. Green's functions) are absolutely continuous with respect to the same reference measure (e.g.~Lebesgue or counting measure), and we will use the same notation for both transition kernels (resp.~Green's functions) and their densities. 

The Green's functions in \eqref{nagasawa} cannot be computed in most cases of interest, but their qualitative behaviour can often be described.
We assume that such a description is available, and that it is possible to write down an approximating family of tractable functions with similar qualitative behaviour.
It is not necessary for the match to be very precise, though better approximations yield more efficient SMC algorithms.

Our strategy for defining an SMC proposal is as follows:
\begin{enumerate}
\item Design an approximate Green's function $\hat{G}( \mu, x )$ to be substituted into \eqref{nagasawa} to yield an approximate reverse-time transition kernel $\hat{ P }$ and a proposal Markov chain
\begin{equation}\label{proposal}
\left( \prod_{ n = 0 }^{ - \infty } E_n, \prod_{ n = 0}^{ - \infty } \F_n , \left\{ \hat{X}_n \right\}_{ n = 0 }^{ - \infty }, \hat{\Prob}_{\nu} \right)
\end{equation}
where $\hat{\Prob}_{\nu}$ is defined from its finite dimensional distributions via $\hat{P}$ as before.
\item\label{start_cond} If necessary, modify $\hat{\Prob}_{\nu}$ locally to incorporate first hitting time constraints by preventing \eqref{proposal} from returning to $T$ upon leaving it. 
\item\label{end_cond} If necessary, further modify $\hat{\Prob}_{\nu}$ locally to ensure the expected hitting time of $I$ by \eqref{proposal} is finite to ensure finite expected run time.
\end{enumerate}
These steps can seem abstract because of their generality, but we will see in Section \ref{examples} that they can be carried out in many cases of interest.

Steps \ref{start_cond} and \ref{end_cond} could be incorporated automatically and more efficiently by considering the time-reversal of \eqref{mc} after an appropriate Doob's $h$-transform \citep[Section VI.45]{Rogers94}, that is by considering forwards in time transition densities conditioned on hitting the rare event.
However, the $h$-transform is typically intractable, whereas local modifications are widely implementable and still result in efficient algorithms when the dominant contribution to $\E_{ \mu }[ f( X_{ 0 : \tau_T } ) ]$ arises from a region of low $\Prob_{ \mu }$-probability and $I$ lies in a region of high $\Prob_{ \mu }$-probability, as then the ratio of Green's functions will drive the process away from $T$ and towards $I$ without any conditioning (c.f.~Property (iii) in Section \ref{introduction}).

Proposition \ref{green_prop} presents a practical way of designing approximate ratios of Green's functions for a wide class of models.
For notational simplicity we assume a countable state space, but the same argument holds for continuous state spaces provided the Green's densities $g( x, z )$ exist.
\vskip 11pt
\begin{prop}\label{green_prop}
Consider a transition of the Markov chain \eqref{mc} from $x_{ n - 1 }$ to $x_n$, and suppose the state space can be partitioned so that $x_{ n - 1 } = ( z, y )$ and $x_n = ( z, \bar{ y } )$.
Assume that the conditional sampling distribution (CSD) 
\begin{equation*}
\pi( y | z ) := \Prob_{ \mu }( Y_n = y | Z_n = z )
\end{equation*}
is independent of $n \in \N$ for $\Prob_\mu$-almost every $z$.
Then the ratio of Green's functions in \eqref{nagasawa} cancels to the ratio of CSDs:
\begin{equation*}
\frac{ G( \mu, ( z, y ) ) }{ G( \mu, ( z, \bar{ y } ) ) } = \frac{ \pi( y | z ) }{ \pi( \bar{ y } | z ) }.
\end{equation*}
\end{prop}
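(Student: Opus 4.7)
The natural strategy is a direct manipulation of the Green's function expansion. I would first expand
\begin{equation*}
G(\mu, (z,y)) = \sum_{i=0}^{\infty} \Prob_\mu\bigl(X_i = (z,y),\, \tau_T \ge i\bigr),
\end{equation*}
apply the same identity to $G(\mu, (z,\bar y))$, and then use the event decomposition $\{X_i = (z,y)\} = \{Z_i = z\} \cap \{Y_i = y\}$ together with the definition of conditional probability to factor each term as
\begin{equation*}
\Prob_\mu(Y_i = y,\, Z_i = z,\, \tau_T \ge i) = \Prob_\mu(Y_i = y \mid Z_i = z,\, \tau_T \ge i) \cdot \Prob_\mu(Z_i = z,\, \tau_T \ge i).
\end{equation*}

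The next step is to identify the conditional factor with $\pi(y \mid z)$. Once this is granted, $\pi(y \mid z)$ pulls out of the sum over $i$, the remaining factor $\sum_{i=0}^{\infty} \Prob_\mu(Z_i = z,\, \tau_T \ge i)$ depends only on $z$ and is common to both Green's functions, and so it cancels in the ratio to leave $\pi(y \mid z)/\pi(\bar y \mid z)$, as claimed.

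The main obstacle lies in this identification. The hypothesis as stated only gives time-homogeneity of the unconditional CSD $\Prob_\mu(Y_n = y \mid Z_n = z)$, whereas what is actually needed is the stronger statement that this CSD survives further conditioning on the $\mathcal{F}_{i-1}$-measurable event $\{\tau_T \ge i\}$. The natural way to bridge this gap, which I believe is the implicit reading in the paper (and is certainly true in the coalescent-type applications the proposition is meant to cover, where $Y_i$ is exchangeable with the entries of $Z_i$), is the conditional-independence property that the law of $Y_i$ given $Z_i = z$ does not depend on the rest of the history $X_{0:i-1}$. I would therefore make this conditional-independence assumption explicit at the outset of the proof, and the factor-and-cancel calculation above then becomes a routine application of the tower property; everything after that step is bookkeeping.
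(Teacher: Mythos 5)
Your argument is essentially the paper's own proof: the cemetery-state process $X^\partial$ that the paper introduces is nothing more than notation for inserting the indicator of $\{\tau_T \ge i\}$ into each term of the Green's function sum, after which both proofs condition on $Z_i = z$, pull the CSD out of the time sum, and cancel the remaining $z$-dependent factor in the ratio. You are also right about where the only real subtlety lies: the paper asserts $\pi(y \mid z) = \E_\mu\left[ \1_{\{y\}}(Y_n^\partial) \mid Z_n^\partial = z \right]$ ``by assumption,'' which silently identifies the CSD of the killed chain with the CSD $\Prob_\mu(Y_n = y \mid Z_n = z)$ of the hypothesis; this identification is exactly the conditional independence of $Y_n$ from the $\sigma(X_{0:n-1})$-measurable event $\{\tau_T \ge n\}$ given $Z_n = z$ that you propose to make explicit, so your reading closes a gap the paper leaves implicit rather than introducing a new one.
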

\begin{proof}
Let $\partial$ be a cemetery state, and define the process
\begin{equation*}
X_t^{ \partial } = \begin{cases}
X_t &\text{ if } t \leq \tau_T \\
\partial &\text{ otherwise}
\end{cases}.
\end{equation*}
Note that the laws of $\{ X_n^{ \partial } \}_{ n = 0 }^{ \tau_T }$ and $\{ X_n \}_{ n = 0 }^{ \tau_T }$ coincide, and thus so do their Green's functions evaluated at states $( z, y ) \in ( T \cup \partial)^c$.
Hence, for any such state, Fubini's theorem and conditioning on $Z_n = z$ yield
\begin{align*}
G( \mu, ( z,  y ) ) &= \E_{ \mu }\left[ \sum_{ n = 0 }^{ \infty } \mathds{ 1 }_{ \{ z, y \} }( Z_n^{ \partial }, Y_n^{ \partial } ) \right] = \sum_{ n = 0 }^{ \infty } \E_{ \mu }\left[ \mathds{ 1 }_{ \{ z, y \} }( Z_n^{ \partial }, Y_n^{ \partial } ) \right] \\
&= \sum_{ n = 0 }^{ \infty } \E_{ \mu }\left[ \E_{ \mu }\left[ \mathds{ 1 }_{ \{ y \} }( Y_n^{ \partial } ) | Z_n^{ \partial } = z \right] \mathds{ 1 }_{ \{ z \} }( Z_i^{ \partial } ) \right],
\end{align*}
where $( Z_n^{ \partial }, Y_n^{ \partial } ) := X_n^{ \partial }$.
Now 
\begin{equation*}
\pi( y | z ) = \E_{ \mu }\left[ \mathds{ 1 }_{ \{ y \} }( Y_n^{ \partial } ) | Z_n^{ \partial } = z \right]
\end{equation*}
is independent of $n$ by assumption.
Thus
\begin{equation*}
G( \mu, ( z,  y ) ) = \pi( y | z ) \sum_{ n = 0 }^{ \infty } \E_{ \mu }\left[ \mathds{ 1 }_{ \{ z \} }( Z_n ) \mathds{ 1 }_{ n : \infty }( \tau_T ) \right] = \pi( y | z ) \sum_{ n = 0 }^{ \infty } \sum_{ t = 0 }^{ \infty } \Prob_{ \mu }( Z_n = z, \tau_T = t ).
\end{equation*}
Now note that the final double sum cancels whenever the Green's functions are evaluated as ratios, which completes the proof.
\end{proof}
\vspace{11pt}
\begin{rmk}\label{cancellation}
The hypothesis of Proposition \ref{green_prop} is a weak conditional stationarity condition, and is relatively mild.
However, since in practice we are rely on ad hoc approximations to ratio of Green's functions, it is possible to extend the scope of the reverse-time framework by defining the proposal distribution based on a family of approximate CSDs $\hat{ \pi }( y | z )$ even when Proposition \ref{green_prop} fails.
In such a case, the ratio of approximate CSDs does not approximate a ratio of Green's functions, but can still yield an efficient algorithm if the qualitative behaviour of the approximate CSD resembles that of the true Green's function.
For example, if $\pi(y | z )$ is not independent of time, but the time dependence in $\pi( y | z )$ is weak, then a CSD corresponding to a time homogeneous system may still yield a good approximation.

Because of their lower dimension, approximate CSDs are often much easier to design than either proposal kernels $\{ Q( \cdot, \cdot ) \}$ or approximate Green's functions $\{ \hat{ G }( \mu, \cdot ) \}$.
Indeed, we consider this dimensionality reduction in the design task one of the main advantages of the reverse-time framework.
\end{rmk}

Choosing a family of approximate CSDs $\{ \hat{ \pi }( \cdot | \cdot ) \}$ and applying Proposition \ref{green_prop} to \eqref{nagasawa} yields proposal transition probabilities of the form
\begin{equation*}
\hat{ P }( ( z, \bar{ y } ), ( z, y ) ) = \frac{ \hat{ \pi }( y | z ) }{ C( ( z, \bar{ y } ) ) \hat{ \pi }( \bar{ y } | z ) } P( ( z, y ), ( z, \bar{ y } ) ),
\end{equation*}
where $C( ( z, \bar{y} ) )$ is a normalising constant, and the notation for the state variables is as in Proposition \ref{green_prop}.
The corresponding incremental importance weight at step $n$ is 
\begin{equation*}
\frac{ C( ( z, \bar{ y } ) ) \hat{ \pi }( \bar{ y } | z ) }{ \hat{ \pi }( y | z ) }.
\end{equation*}
Once a proposal chain has been constructed, functionals of interest can be unbiasedly estimated as
\begin{align}
\E_{ \mu }[ f( X_{ 0 : \tau_T } ) ] &\approx \frac{ 1 }{ N } \sum_{ j = 1}^N f( x_{ 0: \tau_T^{ ( j ) } }^{ ( j ) } ) \frac{ d\Prob_{ \mu } }{ d\hat{\Prob}_{ \nu } }( x_{ 0 : \tau_T^{ ( j ) } }^{ ( j ) } ) \nonumber \\
&= \frac{ 1 }{ N } \sum_{ j = 1 }^N f( x_{ 0 : \tau_T^{ ( j ) } } ) \frac{ \mu( x_0^{ ( j ) } ) }{ \nu( x_{ \tau_T } ) } \prod_{ n = 1 }^{ \tau_T } \frac{ \hat{ \pi }( n( x_{ n - 1 }^{ ( j  ) }, x_n^{ ( j ) } ) | e( x_n^{ ( j ) }, x_{ n - 1 }^{ ( j ) } ) ) }{ \hat{ \pi }( n( x_n^{ ( j  ) }, x_{ n - 1 }^{ ( j ) } ) | e( x_n^{ ( j ) }, x_{ n - 1 }^{ ( j ) } ) ) } C( x_n^{ ( j ) } ) \label{estimator},
\end{align}
where $\{ x_{ 0 : \tau_T^{ ( j ) } }^{ ( j ) } \}_{ j = 1 }^N$ is a sample from the SMC algorithm that uses \eqref{proposal} as its proposal mechanism, $e( x, y )$ is the vector consisting of those entries of $x$ and $y$ which are equal, and $n( x, y )$ is the vector consisting of the entries of $x$ which are not equal to the corresponding entry of $y$.
The purpose of this somewhat bulky notation is to encode the splitting of the state space into coordinates which remain constant between transitions, and others which change, that is introduced in Proposition \ref{green_prop}.
A formal algorithm specification for sampling properly weighted particle ensembles is given below in Algorithm \ref{alg}.
\begin{algorithm}[!ht]
\caption{Reverse-time multilevel SMC}
\label{alg}
\begin{algorithmic}[1]
\Require Particle number $N$, approximate CSD $\hat{ \pi }$, stopping times $\{ \tau_i \}_{ i = n }^1$ such that $0 < \tau_n \leq \tau_{ n - 1 } \leq \ldots \leq \tau_1 = \tau_I$.
\For{j = 1 to N}
	\State $X_0^j \sim \nu$. \Comment{Initial particle locations.}
	\State $\bar{ X }_0^j \gets X_0^j$. \Comment{Duplicates used in resampling.}
	\State $w_j \gets \nu( X_0^j )^{ -1 }$. \Comment{Initial importance weights.}
	\State $k_j \gets 0, \bar{ k }_j \gets 0$. \Comment{Time indices.}
	\State $A_j \gets j$. \Comment{Ancestor indices.}
\EndFor
\For{i = n to 1} \Comment{Do for each level:}
	\If{$\operatorname{ESS}( w_{ 1 : N } ) < N / 2$} \Comment{Resampling check.}
		\For{j = 1 to N}
			\State $A_j \sim \sum_{ k = 1 }^N w_k \delta_k$. 
			\State $\bar{ k }_j \gets k_{ A_j }$.
			\State $\bar{ X }_{ k_j }^j \gets X_{ k_j }^j$.
		\EndFor
		\State Set $\bar{ w } \gets N^{ -1 } \sum_{ k = 1 }^N w_k$.
	\Else
		\For{j = 1 to N}
			\State $A_j \gets j$. 
			\State $\bar{ k }_j \gets k_j$.
			\State $\bar{ X }_{ k_j }^j \gets X_{ k_j }^j$.
		\EndFor
	\EndIf
	\For{j = 1 to N}
		\If{$\bar{ k }_j < \tau_i$} \Comment{First step in level.}
			\State $k_j \gets \bar{ k }_j + 1$.
			\State $X_{ k_j }^j \sim \frac{ \hat{ \pi }( n( \cdot, \bar{ X }_{ \bar{ k }_j }^{ A_j } ) | e( \cdot, \bar{ X }_{ \bar{ k }_j }^{ A_j } ) ) P( \cdot , \bar{ X }_{ \bar{ k }_j }^{ A_j } ) \mathds{ 1 }_{ T^c }( \cdot ) }{ \hat{ \pi }( n( \bar{ X }_{ \bar{ k }_j }^{ A_j }, \cdot ) | e( \cdot, \bar{ X }_{ \bar{ k }_j }^{ A_j } ) ) C( \bar{ X }_{ \bar{ k }_j }^{ A_j } ) }$. 
			\State $w_j \gets \bar{ w } \frac{ \hat{ \pi }( n( \bar{ X }_{ \bar{ k }_j }^{ A_j }, X_{ k_j }^j ) | e( X_{ k_j }^j, \bar{ X }_{ \bar{ k }_j }^{ A_j } ) ) C( \bar{ X }_{ \bar{ k }_j }^{ A_j } ) }{ \hat{ \pi }( n( X_{ k_j }^j,  \bar{ X }_{ \bar{ k }_j }^{ A_j } ) | e( X_{ k_j }^j, \bar{ X }_{ \bar{ k }_j }^{ A_j } ) ) }$.
			\State $A_j \gets j$.
		\EndIf
		\While{$k_j < \tau_i$} \Comment{Propagate until next level.}
			\State $X_{ k_j }^j \sim \frac{ \hat{ \pi }( n( \cdot, X_{ k_j }^j ) | e( \cdot, X_{ k_j }^j ) ) P( \cdot , X_{ k_j }^j ) \mathds{ 1 }_{ T^c }( \cdot ) }{ \hat{ \pi }( n( X_{ k_j }^j, \cdot ) | e( \cdot, X_{ k_j }^j ) ) C( X_{ k_j }^j ) }$.
			\State $w_j \gets w_j \frac{ \hat{ \pi }( n( X_{ k_j }^j,  X_{ k_j + 1 }^j ) | e( X_{ k_j + 1 }^j, X_{ k_j }^j ) ) C( X_{ k_j }^j ) }{ \hat{ \pi }( n( X_{ k_j + 1 }^j, X_{ k_j }^j ) | e( X_{ k_j + 1 }^j, X_{ k_j }^j ) ) }$.
			\State $k_j \gets k_j + 1$.
		\EndWhile
	\EndFor
\EndFor
\For{j = 1 to N}
	\State Set $w_j \gets w_j \mu\left( X_{ k_j }^j \right)$. \Comment{Account for entrance law $\mu$.}
\EndFor
\end{algorithmic}
\end{algorithm}
\vskip 11pt
\begin{rmk}\label{rev_time_strengths}
Approximating \eqref{estimator} can be computationally daunting if $f|_T$ takes non-negligible values in a high-dimensional or large (in terms of Lebesgue-volume) subset of $T$, which is the rationale for property (ii) in Section \ref{introduction}.
In such cases we do not expect the reverse-time approach to always be competitive with forwards-in-time algorithms, particularly if the initial set $I$ is also small and hence difficult for the reverse-time chain to hit.
\end{rmk}
\vskip 11pt
\begin{rmk}
Normalising constants $C(x)$ in \eqref{estimator} would all be identically equal to one if an algorithm using the true ratio of Green's functions could be implemented.
Thus, the realised values of these constants for a given approximation could be used to design proposal distributions adaptively from trial runs, at least for discrete systems where the constants can be computed.
We do not explore this possibility further in this paper.
\end{rmk}

We conclude this section with a discussion on the intuition of why a SMC proposal distribution of the form \eqref{proposal} should be efficient in the rare event simulation context outlined in Section \ref{preliminaries}.
Loosely speaking, efficient rare event simulators for dynamical systems succeed by closely mimicking the conditional dynamics given the event of interest. 
In general, it is very difficult to identify good approximations to these dynamics, as they vary markedly from the unconditional dynamics of the system.
In contrast, in reverse time the unconditional dynamics and the dynamics conditioned on a rare end point are very similar: both will rapidly leave the rare state, and return to regions of high probability.
This behaviour is captured by the ratio of Green's functions in \eqref{nagasawa}, from which it is easy to see that steps that are proposed most often are those which, forwards in time, are both compatible with the dynamics of the model ($P( x_{ i - 1 }, x_i ) \gg 0$), and move the state towards the desired region of low probability ($G( \mu, x_{ i - 1 } ) > G( \mu, x_i )$).
The reverse-time view permits initialising samples in the rare state of interest, and hence enables sampling trajectories that approximate the conditional dynamics given the rare event using only unconditional quantities --- namely $P$ and an approximation of $G( \mu, \cdot )$.

\section{Numerical examples}\label{examples}

In this section we present three simulation studies illustrating the reverse-time approach: the asynchronous transfer mode (ATM) network (Section \ref{atm}), the hyperbolic diffusion (Section \ref{hyper_diff}) and the susceptible-infected-susceptible (SIS) epidemic on a network (Section \ref{sis}).
All three examples fulfil conditions (i)-(iv) in Section \ref{introduction}, and consequently our reverse-time SMC outperforms a state-of-the-art adaptive multilevel splitting algorithm \citep{Cerou12} in the first two examples.
Designing an efficient forwards-in-time SMC or splitting algorithm to tackle the third example would be a formidable task, and hence beyond the scope of this paper.
In contrast, we show that designing an efficient algorithm in the reverse-time framework is straightforward.

\subsection{The asynchronous transfer mode network}\label{atm}

Our first example is the ATM network studied by \citet{Glasserman99} in the context of rare events.
The network consists of $K$ sources, each of which is either on or off.
Sources which are off do nothing, while sources which are on produce packets at rate $\lambda$.
Packets are serviced by a common server with rate $\mu$ using the first-in-first-out policy.
Off sources turn on at rate $\alpha_0$ and on sources turn off at rate $\alpha_1$.
The state of the system is specified as $( i, j ) \in \N_0 \times \{ 0, \ldots, K \},$ where $i$ denotes the number of packets in the queue and $j$ the number of on sources.

\citet{Glasserman99} estimated the probability of the queue length hitting a barrier $b \in \N$ before emptying, given an empty initial queue and $K \alpha_0 / ( \alpha_0 + \alpha_1 )$ on sources.
Reverse-time SMC could be used for this example by summing over all possible numbers of terminal on sources, but this results in a $K$-fold increase in computational burden and hence cannot be expected to be competitive with a forwards-in-time approach.
We focus instead on the joint probability of an initially empty queue hitting a barrier $b$ before emptying, with exactly $k$ sources on at the hitting time, and assume the initial number of on sources is $\operatorname{Bin}( K, \alpha_0 / ( \alpha_0 + \alpha_1 ) )$ distributed.
In this scenario a forwards-in-time algorithm would face the same difficulties as a reverse-time algorithm does in the scenario of \citet{Glasserman99}.

The initial set is $I = \cup_{ r = 0 }^K \{ ( 0, r ) \}$, the target set is $T = \cup_{ r = 0 }^K \{ ( 0, r ) \cup ( b, r ) \}$, the initial law is
\begin{equation*}
\mu( \{ ( 0, j ) \} ) = \binom{ K }{ j } \left( \frac{ \alpha_0 }{ \alpha_0 + \alpha_1 } \right)^j \left( \frac{ \alpha_1 }{ \alpha_0 + \alpha_1 } \right)^{ K - j },
\end{equation*}
and the quantity of interest is the hitting probability 
\begin{equation*}
\E_{ \mu }[ f( \tau_T, X_{ 0 : \tau_T } ) ] = \E_{ \mu }[ \mathds{ 1 }_{ \{ ( b, k ) \} }( X_{ \tau_T } ) ].
\end{equation*}
The proposal is specified by defining approximate conditional distributions of $i$ given $j$ and $j$ given $i$.
These are denoted by $\hat{ \pi }_i( i | j )$ and $\hat{ \pi }_j( j | i )$ respectively, and we choose them to be
\begin{align*}
\hat{ \pi }_i( i | j ) &\propto \left( \frac{ \lambda \max\{ j, 1 \} }{ \mu } \right)^i \text{ for } i \in 0:b \text{ and } j \in 0:K, \\
\hat{ \pi }_j( j | i ) &\propto \binom{ K }{ j } \left( \frac{ \alpha_0 }{ \alpha_0 + \alpha_1 } \right)^j \left( \frac{ \alpha_1 }{ \alpha_0 + \alpha_1 } \right)^{ K - j } \hat{ \pi }_i( i | j ) \text{ for } i \in 0 : b \text{ and } j \in 0 : K.
\end{align*}
The former is the true distribution of a queue with arrival rate $\lambda j$ and service rate $\mu$ whenever $\mu > \lambda j$, and well-defined otherwise as well because the range of possible values of $i$ is finite.
The minimum in $\hat{ \pi }_i( i | j )$ is to allow for positive queue lengths even when no sources are active.
We also employ stopping time resampling, with simulations stopped every time a new minimum queue length is reached in reverse time.
Resampling takes place if the effective sample size is below half of the number of particles once all simulations have been stopped.

To present a comparison of our method, we also implemented the adaptive multilevel splitting algorithm of \citet{Cerou12}.
This algorithm requires the user to specify a reaction coordinate $\Psi$ (see \eqref{reaction_coord} in Section \ref{preliminaries}), as well as a MCMC kernel targeting $\Prob_{ \mu }( X_{ 0 : \tau_T } | \Psi( X_{ 0 : \tau_T } ) \geq l )$ for any level $l$.
We choose 
\begin{equation*}
\Psi( X_{ 0 : \tau_T } ) \equiv \Psi( ( i_t, j_t )_{ t \in 0 : \tau_T } ) = \max_{ t \in 0 : \tau_T }\{ i_t \}
\end{equation*}
as our reaction coordinate, and specify our MCMC kernel by 
\begin{enumerate}
\item uniformly sampling a time point $t > 0$ along the trajectory,
\item uniformly sampling a new step from $X_{ t - 1 }$ to replace the current value of $X_t$,
\item attaching the remaining trajectory $X_{ ( t + 1 ) : \tau_T }$ to the new value of $X_t$ while keeping its step directions fixed.
\end{enumerate}
Boundary handling was implemented by rejecting any proposals with number of on-sources outside the allowed interval $0 : N$, terminating a proposed trajectory upon its first hitting time of queue length 0 or $b$, and filling in incomplete trajectories from the unconditional ATM queue dynamics if a proposal yielded an incomplete trajectory that had not yet hit $T$.
Proposals whose maximum queue lengths fell below the required level were also rejected.
This adaptive algorithm is known to lead to estimates with near-optimal variance among the class of multilevel splitting algorithms for a given $\Psi$ \citep{Cerou12}, though in practice the variance is also strongly impacted by the MCMC kernel.

\begin{figure}[!ht]
\centering
\includegraphics[width = 0.49\linewidth]{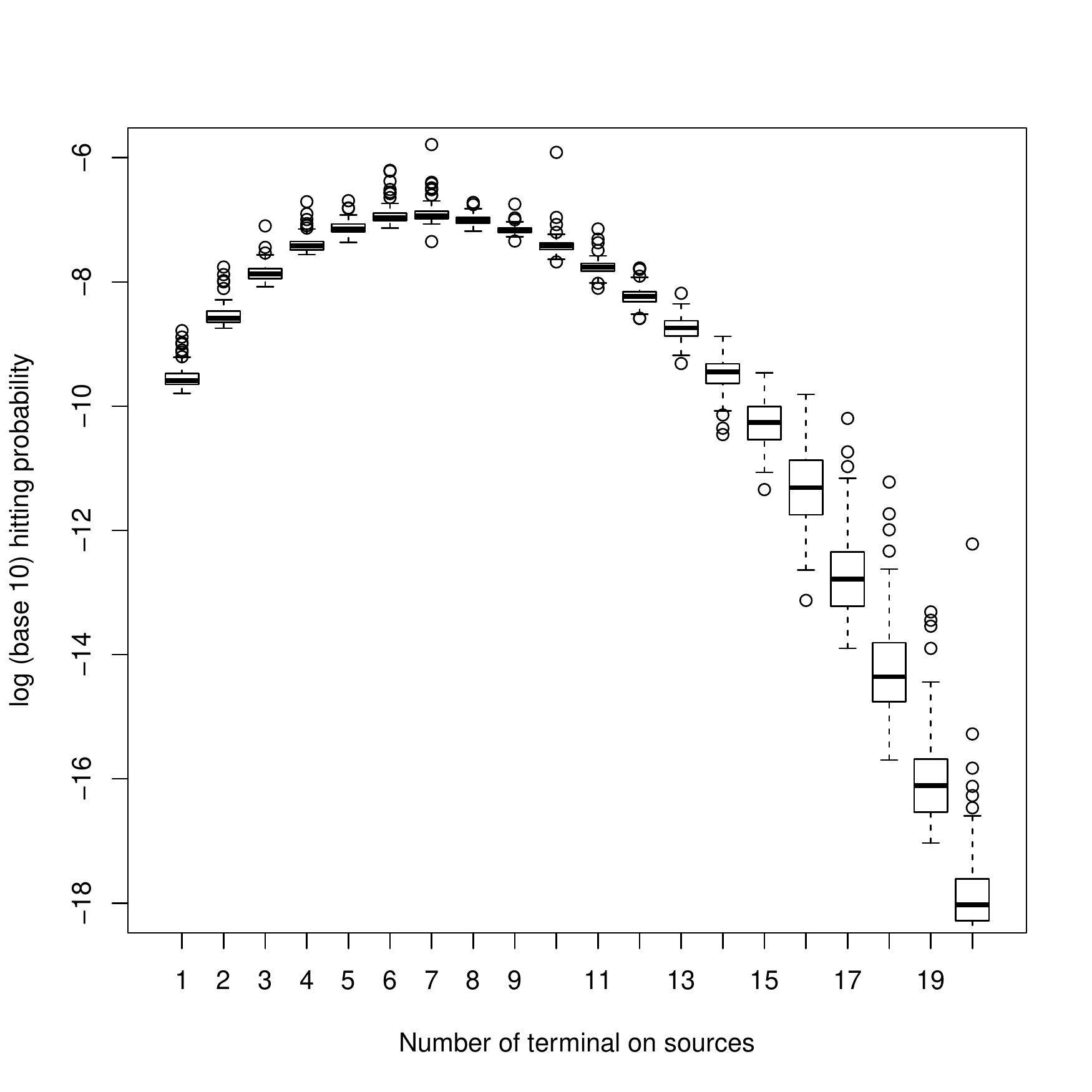}
\includegraphics[width = 0.49\linewidth]{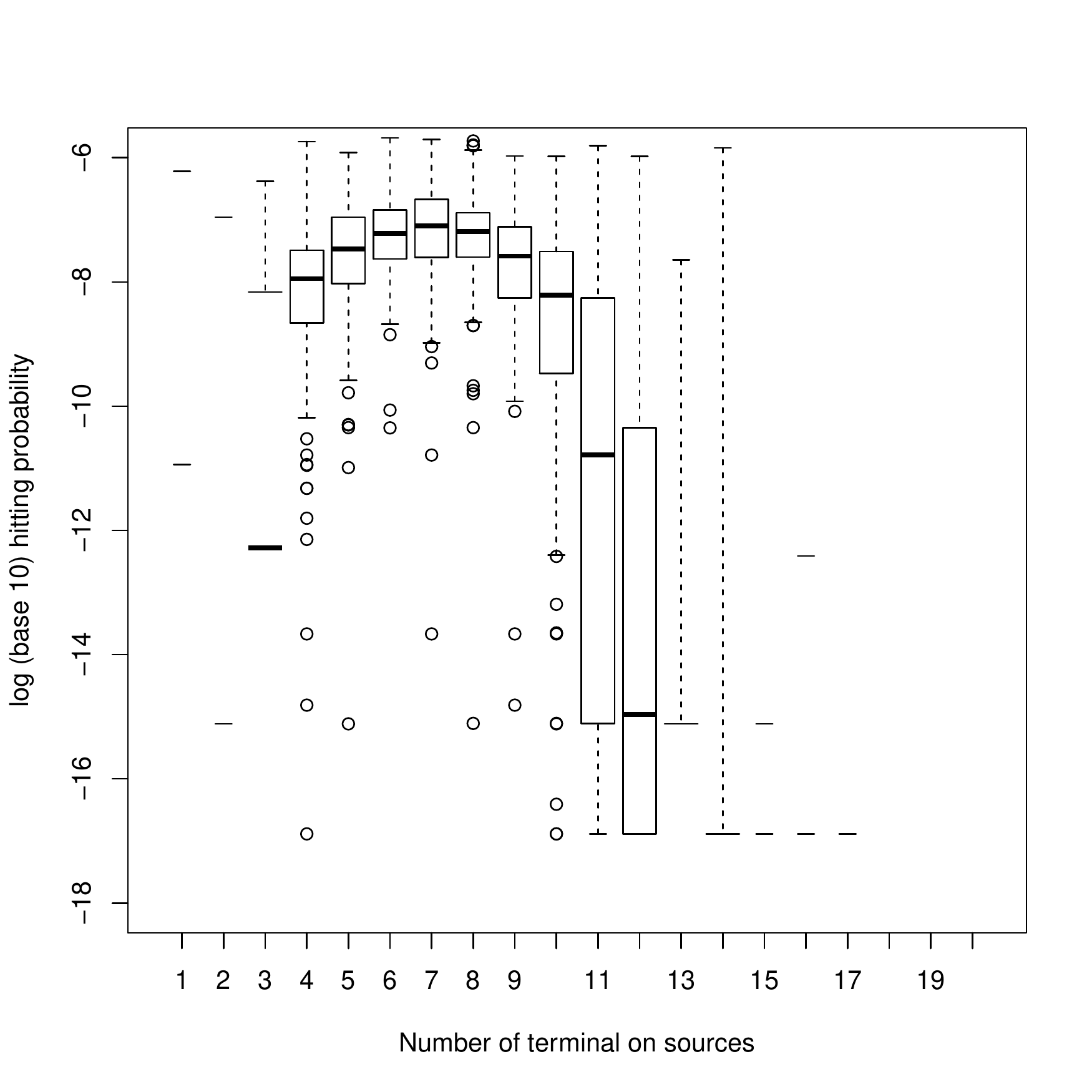}
\caption{100 replicates of simulated hitting probabilities using reverse time SMC (left) and adaptive multilevel splitting (right) on an ATM network with parameters $K = 20$, $b = 10$, $\lambda = 0.5$, $\mu = 10.0$, $\alpha_0 = 1.0$, $\alpha_1 = 3.0$. Particle numbers were 8 000 (SMC) and 10 000 (splitting) respectively, yielding comparable run times of 130 seconds per replicate each on an Intel i5-2520M 2.5 GHz processor. The missing and incomplete boxplots in the bottom figure correspond to terminal values of $j$ that were hit by zero particles  (resulting in a 0 estimate) in too many of the replicates for the default R boxplot method.}
\label{atm_figure}
\end{figure}

Figure \ref{atm_figure} presents 100 replicates of simulated hitting probabilities of a queue length of 10 across all possible fixed numbers of terminal on sources.
The SMC curve shows a high degree of smoothness, strongly suggesting the algorithm has converged, and the Monte Carlo variance is acceptably small for all but the rarest numbers of terminal on-sources.
In contrast, the splitting algorithm has failed to produce estimates for the rarer terminal conditions at all, because too few of the particles being propagated forwards in time managed to hit these regions.
The Monte Carlo variance of the estimates around the mode of the curve are also substantially larger than in the SMC case.

To conclude this section, Figure \ref{atm_fig_2} shows 100 replicates of the reverse time SMC for a more challenging problem.
In this case the adaptive multilevel splitting algorithm could no longer be run in feasible time due to both the length of trajectories and low acceptance probabilities resulting from the rarity of transitions towards the rare event.
Both contributed to slow mixing of the MCMC step, resulting in infeasible run times.
A more efficient MCMC kernel making use of block updates and a more finely tuned proposal mechanism could doubtless improve the performance of the multilevel splitting algorithm, but designing such a proposal is challenging without a detailed understanding of the dynamics of trajectories conditioned on the rare event.
In contrast, reverse time SMC is readily implementable, and remains computationally feasible.

\begin{figure}[!ht]
\centering
\includegraphics[width = 0.49\linewidth]{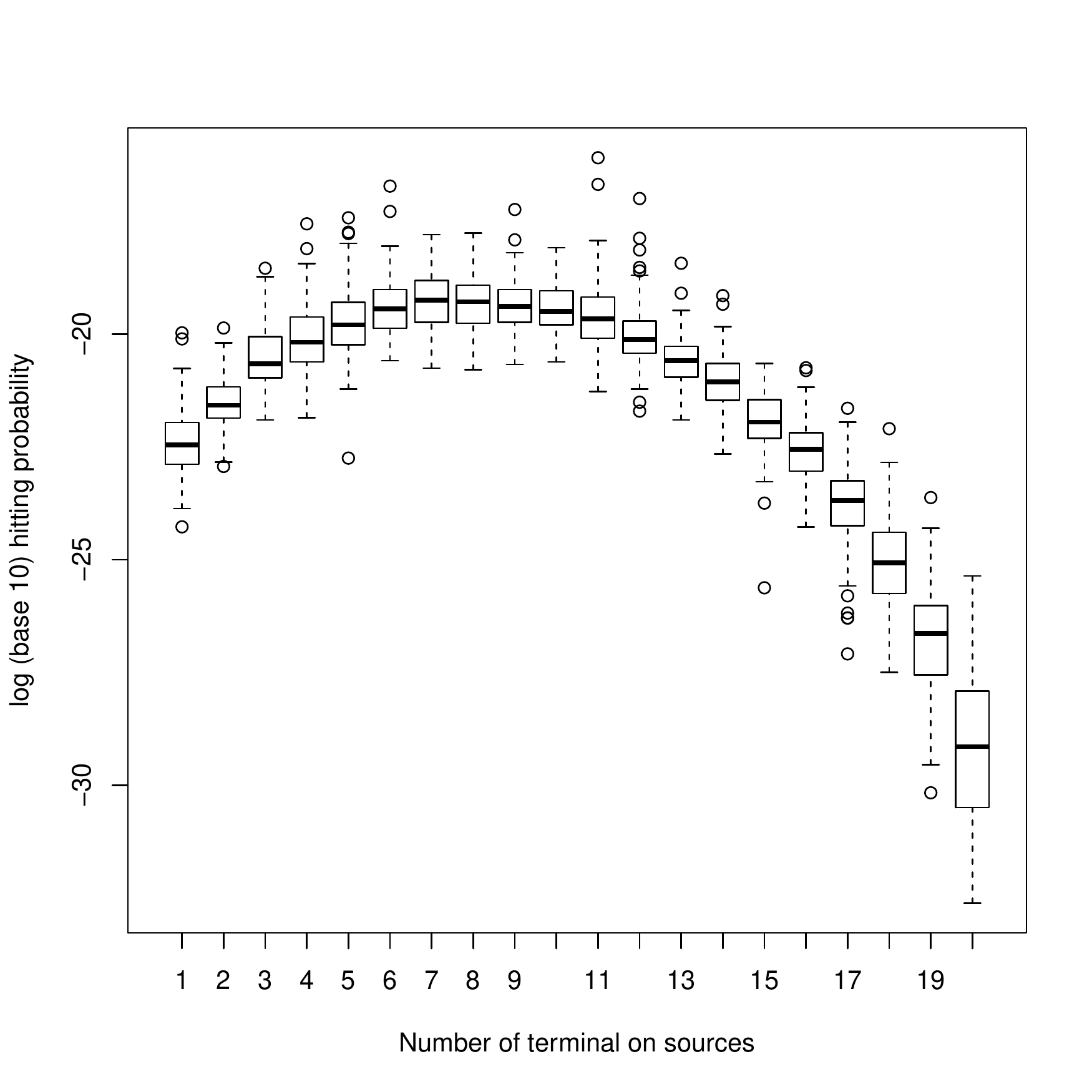}
\caption{100 replicates of simulated hitting probabilities using reverse time SMC on an ATM network with parameters $K = 20$, $b = 30$, $\lambda = 0.5$, $\mu = 10.0$, $\alpha_0 = 1.0$, $\alpha_1 = 3.0$, and 10 000 particles. The total run time was 22 hours, or around 40 seconds per independent simulation (20 terminal conditions $\times$ 100 replicates) on an Intel i5-2520M 2.5 GHz processor.}
\label{atm_fig_2}
\end{figure}

\subsection{The hyperbolic diffusion}\label{hyper_diff}

The scalar hyperbolic diffusion is the solution of the SDE
\begin{equation}\label{hyperbolic_diff}
dX_t = \frac{ - X_t }{ \sqrt{ 1 + X_t^2 } } dt + dW_t,
\end{equation}
where $( W_t )_{ t \geq 0 }$ is a Brownian motion.
It was introduced by \citet{Barndorff78} in connection to hyperbolic distributions in geostatistical modelling \citep{Barndorff77}, and its heavier-than-normal tails have also made it a popular model in mathematical finance \citep{Bibby03}.

The transition probabilities of the diffusion are intractable, but the stationary distribution is known to be the hyperbolic distribution
\begin{equation}\label{hyperbolic_pi}
\pi( x ) = \frac{ 1 }{ 2 K_1( 1 ) } e^{ - \sqrt{ 1 + x^2 } },
\end{equation}
where $K_1$ is the modified Bessel function of the second kind.
We assume that the diffusion is started at stationarity, and focus on the probability that a trajectory lies in an interval $( l_0, u_0 )$ at time 0, and hits interval $( l_t, u_t )$ at time $t \in \N$, without leaving the strip obtained by connecting $l_0$ to $l_t$ and $u_0$ to $u_t$ with straight lines at intermediate times.
Similar containment probabilities have been studied e.g.~by \citet{Casella08} in the context of double barrier option pricing.

Formally, our inference problem is specified by the initial set $I = \{ 0 \} \times ( l_0, u_0 )$, the target set 
\begin{equation*}
T = \bigcup_{ s \in ( 0, t ] } \left( \{ s \} \times \left\{ \frac{ l_t - l_0 }{ t } s + l_0, \frac{ u_t - u_0 }{ t } s + u_0 \right\} \right)
\end{equation*}
the initial distribution $\mu= \pi$, and the quantity of interest 
\begin{equation*}
\E_{ \mu }[ f( X_{ 0 : \tau_T } ) ] = \E_{ \mu }[ \mathds{ 1 }_{ \{ t \} }( \tau_T ) ].
\end{equation*}
We consider a discretisation of \eqref{hyperbolic_diff}, and use the Euler scheme with grid spacing $\Delta $ to define a family of approximate transition densities forwards in time:
\begin{equation}
P_{ \Delta }( ( m, x ), ( n, y ) ) = \frac{ \mathds{ 1 }_{ \{ m + \Delta \} }( n ) }{ \sqrt{ 2 \pi \Delta } } \exp\left( -\frac{ 1}{ 2 \Delta } \left[ y - x \left\{ 1 - \frac{ \Delta }{ \sqrt{ 1 + x^2 } } \right\} \right]^2 \right). \label{euler}
\end{equation}
In this case \eqref{euler} is also a Milstein scheme because of the unit diffusion coefficient.

We can use the discretised transition density \eqref{euler} and the unconditional stationary distribution \eqref{hyperbolic_pi} to define a discretised reverse-time proposal:
\begin{equation*}
\hat{ P }_{ \Delta }( ( n, y ), ( m, x ) ) \propto \frac{ \pi( x ) }{ \pi( y ) } P_{ \Delta }( ( m, x ), ( n, y ) ) \mathds{ 1 }_{ \left( \frac{ l_t - l_0 }{ t } m + l_0, \frac{ u_t - u_0 }{ t } m + u_0 \right) }( x ).
\end{equation*}
We also assume that $\Delta$ divides $t$ exactly, and consider the analogous discretisation of the target set $T$.
We neglect the issue of bias due to unobserved boundary crossings between time discretisation points, though more sophisticated interpolation schemes \cite{Gobet00} could also be implemented.

Note that \eqref{hyperbolic_pi} is not the stationary density of the discrete Euler approximation, which has different stationarity and reversibility properties than the continuous SDE \eqref{hyperbolic_diff}.
However, it can be expected to yield an efficient algorithm, because it is likely that \eqref{hyperbolic_pi} is a good approximation to the occupation measure of the Euler approximation, at least for a sufficiently small time step.
We neglect the bias caused by using \eqref{hyperbolic_pi} as the initial condition of our discretised process.

The proposal distribution can be normalised numerically, and sampled by proposing $x( 1 - \Delta ( 1 + x^2 )^{ -1/2 } )$ from a $\mathcal{ N }( y, \Delta )$ proposal distribution, solving for $x$ and accepting the proposal with probability $e^{1 - \sqrt{ 1 + x^2 } }$.
We have incorporated step 2.~of our strategy (see page 6) automatically in the above definition by rejecting proposed values outside the permitted strip.
We also employ dynamic resampling, in which particles are resampled whenever the effective sample size falls below half the number of trajectories.

As a simulated example, we set $t = 2$, $( l_0, u_0 ) = ( -1, 1 )$, $\Delta = 0.01$ and $( l_t, u_t ) = ( 5, 5.1 )$. 
Note the wide initial condition, the narrow terminal condition and the upward slope of the strip, resulting in rare terminal conditions.
All of these features make time reversal an attractive approach.
Again, we compare our method to the adaptive multilevel splitting algorithm of \citet{Cerou12}, with reaction coordinate 
\begin{equation*}
\Psi( X_{ 0 : \tau_T } ) := \tau_T,
\end{equation*}
and MCMC kernel which, for level $\Psi( X_{ 0 : \tau_T } ) > l$
\begin{enumerate}
\item samples a non-initial time point $s$ uniformly at random,
\item if $s \leq l$, then the proposal is $X_s' \sim U( l_s, u_s )$,
\item else $X_s' \sim P( X_{s - \Delta}, \cdot )$.
\end{enumerate}
Again, proposed trajectories are terminated as soon as they hit the boundary of the target region, and incomplete proposals are propagated using $P$, the unconditional dynamics of the model, until the hitting time $\tau_T$.
Averaging over the initial distribution was done with naive Monte Carlo, with the overall containment probability taken as an average of containment probabilities from uniformly sampled initial conditions weighted by $\mu$.

\begin{figure}[!ht]
\centering
\includegraphics[width = 0.49 \linewidth]{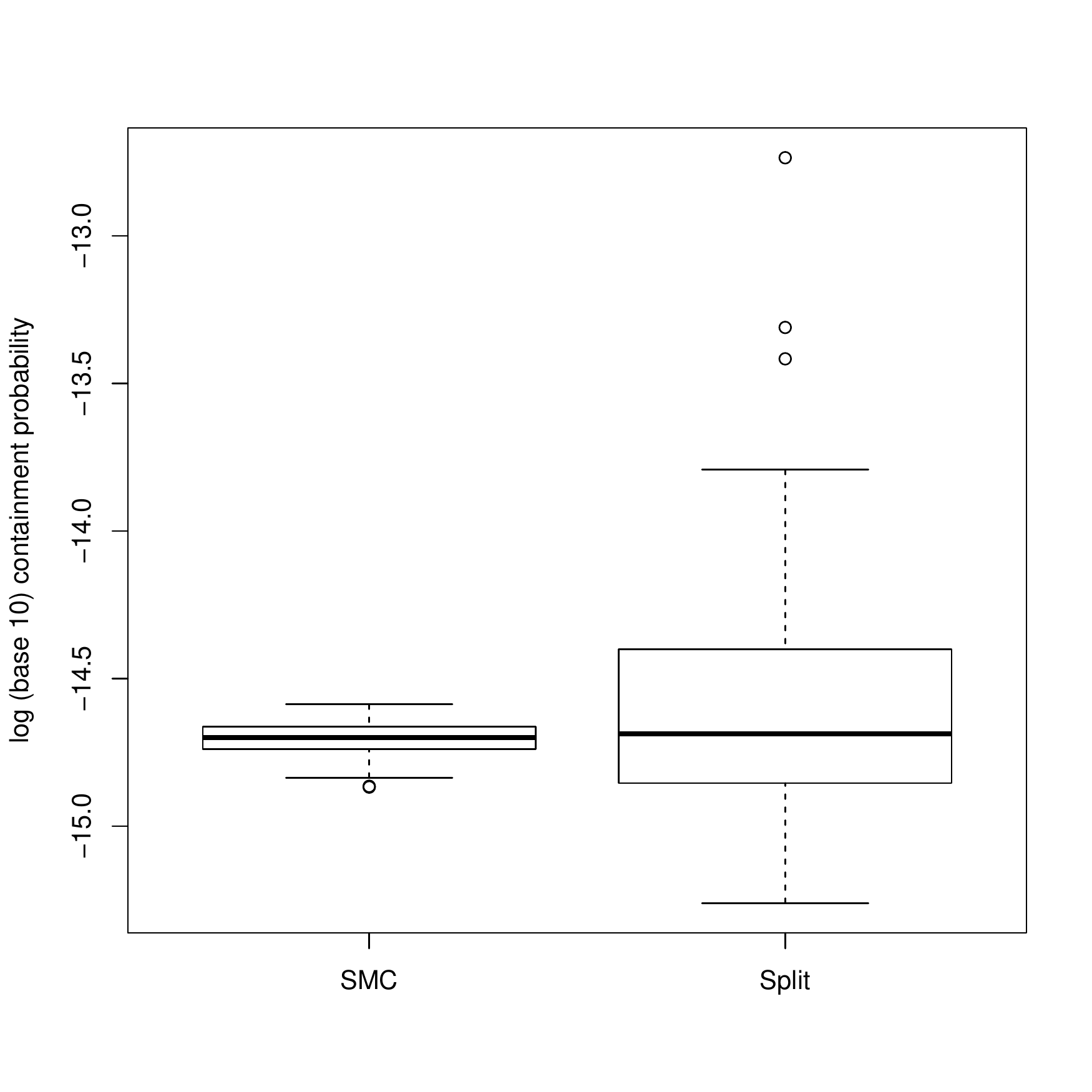}
\caption{100 replicates of simulated containment probabilities of the hyperbolic diffusion with initial interval $( l_0, u_0 ) = ( -1, 1 )$, terminal interval $( l_t, u_t ) = ( 5, 5.1 )$, trajectory length $t = 2$ and time step $\Delta = 0.01$. The particle numbers are 1 000 for SMC and 100 particles for each of 1 000 initial conditons for splitting, giving total run times of 7 seconds (SMC) and 240 seconds (splitting) per replicate on an Intel i5-2520M 2.5 GHz processor.}
\label{diff_figure}
\end{figure}

Figure \ref{diff_figure} shows reverse-time SMC clearly outperforming adaptive multilevel splitting, mainly because the forwards-in-time splitting algorithm has to integrate over a much wider initial condition.
Updating discretised diffusion trajectories using MCMC is also slow because of their length (200 steps in Figure \ref{diff_figure}) combined with the low acceptance probabilities associated with remaining contained in a narrowing interval.
As in the previous section, Figure \ref{diff2} demonstrates the performance of our SMC algorithm in a regime in which our multilevel splitting comparison was no longer feasible due to the increased trajectory length and resulting slower mixing.

\begin{figure}[!ht]
\centering
\includegraphics[width = 0.49 \linewidth]{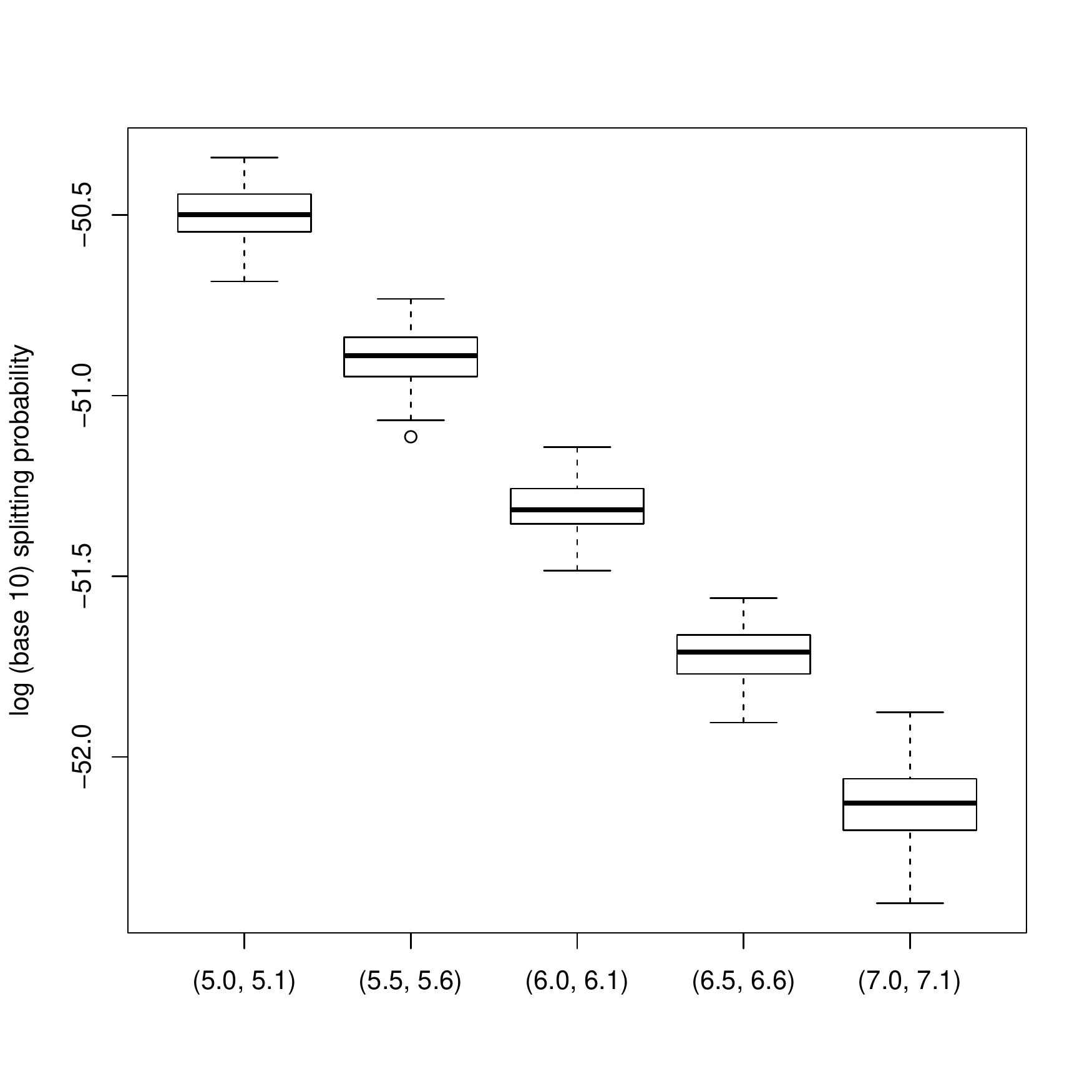}
\caption{100 replicates of simulated containment probabilities of the hyperbolic diffusion with initial interval $( l_0, u_0 ) = ( -1, 1 )$, terminal interval, trajectory length $t = 10$, time step $\Delta = 0.01$ and terminal interval as show. All estimates are using 1 000 particles with per replicate run times of 36, 39, 50, 55, and 66 seconds on an Intel i5-2520M 2.5 GHz processor, respectively.}
\label{diff2}
\end{figure}

\subsection{The susceptible-infected-susceptible network}\label{sis}

Consider a finite network with vertices $V$ and undirected edges $E$, and with vertices labelled as either susceptible ($S$) or infected ($J$).
For a vertex $v \in V$, let $l( v ) \in \{ J, S \}$ denote its label, $N_v := \{ v' \in V : ( v, v' ) \in E \}$ denote its neighbourhood and, for $a \in \{ J, S \}$, let
\begin{equation*}
N_v^{ a } := \{ v' \in V : ( v, v' ) \in E \text{ and } l( v' ) = a \}
\end{equation*}
denote the sub-neighbourhood with label $a$.
Then the susceptible-infected-susceptible (SIS) epidemic evolves as follows.

Every infected node is cured with rate $\beta > 0$, at which point it immediately becomes susceptible again.
A susceptible node is infected by each infected neighbour at rate $\alpha > 0$, so that a vertex $v \in V$ becomes infected at total rate $\alpha | N_v^J |$.
These types of dynamics on networks are popular models e.g.~for the spread of biological epidemics in structured populations \citep{Moore00, Pastor01, Ganesh05}, malware in computer networks \citep{Shah10}, and rumours in social networks \citep{Fuchs15, Shah16}, and are also sometimes referred to as contact processes.
In addition, we impose a rate $\gamma > 0$ at which a new infection enters the network, infecting one uniformly sampled node.
We assume further that new infections can only enter when all vertices are susceptible, i.e.~only one infection can exist in the population at one time.

Suppose that there is no infection in the initial population, and that small infections go undetected.
We define an infection as large once it infects a fixed number $M$ of nodes, and assume that the labels of all nodes are immediately observed as soon as a large infection arises.
Infection times are not observed, nor is any information about the history of the infection, such as whether a vertex that is now susceptible was previously infected.
We are interested in inferring the initial location of the observed large infection, which may no longer be infected itself.
Point estimators for similar inference problems have been studied in \citep{Shah10, Fuchs15, Shah16}.

More formally we consider the jump skeleton of the above continuous time Markov process, let $l_t( v )$ denote the label of vertex $v \in V$ at time $t \in \N_0$, and let the Markov chain $\{ X_t \}_{ t = 0 }^{ \tau_T }$ be given as 
\begin{equation*}
X_t = \{ v \in V : l_t( v ) = J \},
\end{equation*} 
i.e.~the set of infected vertices at time $t$.
Then the initial condition $I$ is the empty set, the target set $T$ is the observed configuration of infected vertices, and the quantity of interest is
\begin{equation*}
\E_{ \mu }[ f( X_{ 0 : \tau_T } ) | \tau_T < \tau_I ] = \E_{ \mu }[ ( \mathds{ 1 }_{ \{ v \} }( X_1 ) )_{ v \in V } | \tau_T < \tau_I ],
\end{equation*}
where $( \mathds{ 1 }_{ \{ v \} }( \cdot ) )_{ v \in V }$ denotes a vector of indicator functions indexed by vertices of the network.
Note that approximating $\E_{ \mu }[ f( X_{ 0 : \tau_T } ) | \tau_T < \tau_I ]$ using a forwards-in-time algorithm is challenging because it can be difficult to know a priori which nodes are likely initial infecteds, and hence the algorithm may spend much effort sampling trajectories of low probability.
The problem also lacks a natural reaction coordinate \eqref{reaction_coord}; the candidate
\begin{equation*}
\Psi( X_{ 0 : \tau_T } ) := \max_{ t \in 0 : \tau_T } \{ | X_t | \}
\end{equation*}
succeeds in creating large epidemics, but does not favour infecting the observed nodes over any others.
Hence, the probability of observing the correct infected configuration is still infeasibly low, even given the correct infection size.
Because nodes can get uninfected and reinfected, the observed infection configuration cannot easily be expressed as the first hitting time of a region of the state space, making this problem very difficult to address.
Neither of these problems causes any difficulty in reverse time: a reaction coordinate is not needed and the initial condition is integrated over automatically by the reverse time algorithm.

It remains to specify the proposal distribution, which we do by specifying a CSD of the label of one vertex given the labels of all the others.
Conditioned on the labels of all other vertices, vertex $v \in V$ becomes infected at fixed rate $\alpha | N_v^J |$ and susceptible with rate $\beta$, so it is natural to base the CSD on the quantities
\begin{equation*}
\left( \frac{ \alpha | N_v^J | }{ \alpha | N_v^J | + \beta }, \frac{ \beta }{ \alpha | N_v^J | + \beta } \right)
\end{equation*}
which are the true stationary probabilities of the label of $v \in V$ being $J$ and $S$ respectively, if the labels of all of its neighbours were fixed.
However, in reverse time a vertex with no infected neighbours can also become infected, corresponding to a forwards-time infection spreading outwards and all connecting individuals becoming uninfected before an isolated leaf infection.
Hence, we add a fixed $\varepsilon > 0$ to the infection rate of every vertex, yielding probabilities
\begin{equation*}
\left( \frac{ \alpha | N_v^J | + \varepsilon }{ \alpha | N_v^J | + \beta + \varepsilon }, \frac{ \beta }{ \alpha | N_v^J | + \beta + \varepsilon } \right).
\end{equation*}
Finally, we also want infections to travel towards a common centre of mass.
To modify our proposal distribution to achieve this,  we require some notation.
For a vertex $v$, let $v_{ \uparrow }$ denote the neighbour directly above, $v_{ \downarrow }$ the neighbour directly below, and analogously let $v_{ \leftarrow }$ and $v_{ \rightarrow }$ deonte the two horizontal neighbours.
Finally, let $d_{\updownarrow}(v) = 1$ if the vertical coordinate of $v$ is lower than the average vertical coordinate of infected vertices in the network, $d_{\updownarrow}(v) = -1$ if the vertical coordinate of $v$ is higher than the average among infected vertices, and $d_{\updownarrow}(v) = 0$ if the two coincide.
Define $d_{\leftrightarrow}(v)$ analogously in the horizontal direction.
Now, let the CSD be given by
\begin{equation*}
\hat{ \pi }( l( v ) | \{ v', l( v' ) \}_{ v' \neq v } ) \propto
\begin{cases}
\frac{ \alpha | N_v^J | + \varepsilon }{ \alpha | N_v^J | + \beta + \varepsilon } w( v ) &\text{ if } l( v ) = J \\
\frac{ \beta }{ \alpha | N_v^J | + \beta + \varepsilon } &\text{ if } l( v ) = S
\end{cases},
\end{equation*}
where the weight $w( v )$ is defined as
\begin{equation*}
w( v ) = 2^{ -[ \mathds{ 1 }_J( l( v_{\uparrow } ) ) - \mathds{ 1 }_J(l( v_{\downarrow} ) ) ] d_{\updownarrow}( v ) - [ \mathds{ 1 }_J( l( v_{\rightarrow } ) ) - \mathds{ 1 }_J( l( v_{\leftarrow} ) ) ] d_{\leftrightarrow}( v ) }.
\end{equation*}
This weight is greater than one when a node that is proposed to be infected lies between an infected neighbour and the centre of mass of the infection, and less than one when a proposed infection is spreading from a neighbour away from the centre of mass.
Hence, it instils a tendency for infections to grow towards a common centre of mass in reverse time.
We note that an approximate CSD based on a larger neighbourhood size would yield a more accurate approximation at greater computational cost.

To study the Monte Carlo variance of our method, Figure \ref{sis_figure} shows the maximum values of estimated likelihood surfaces across 100 iterations and various network sizes.
The locations of the maximisers are summarised in Table \ref{locations}.
In all cases they concentrate on at most two neighbouring vertices with high probability, and invariably both are close to the truth.
We note that there is no reason for the maximum likelihood estimator to coincide with the truth: the quantity of interest is in Table \ref{locations} is the concentration of the maximum likelihood estimators across replicates.
Likewise, there is no reason to expect any particular relationship between maximum likelihood values between network sizes.
The purpose of Figure \ref{sis_figure} is to demonstrate that reverse-time SMC naturally remains tightly concentrated on high probability regions of the path space even as the size of the state space grows.

\begin{figure}[!ht]
\centering
\includegraphics[width = 0.49 \linewidth]{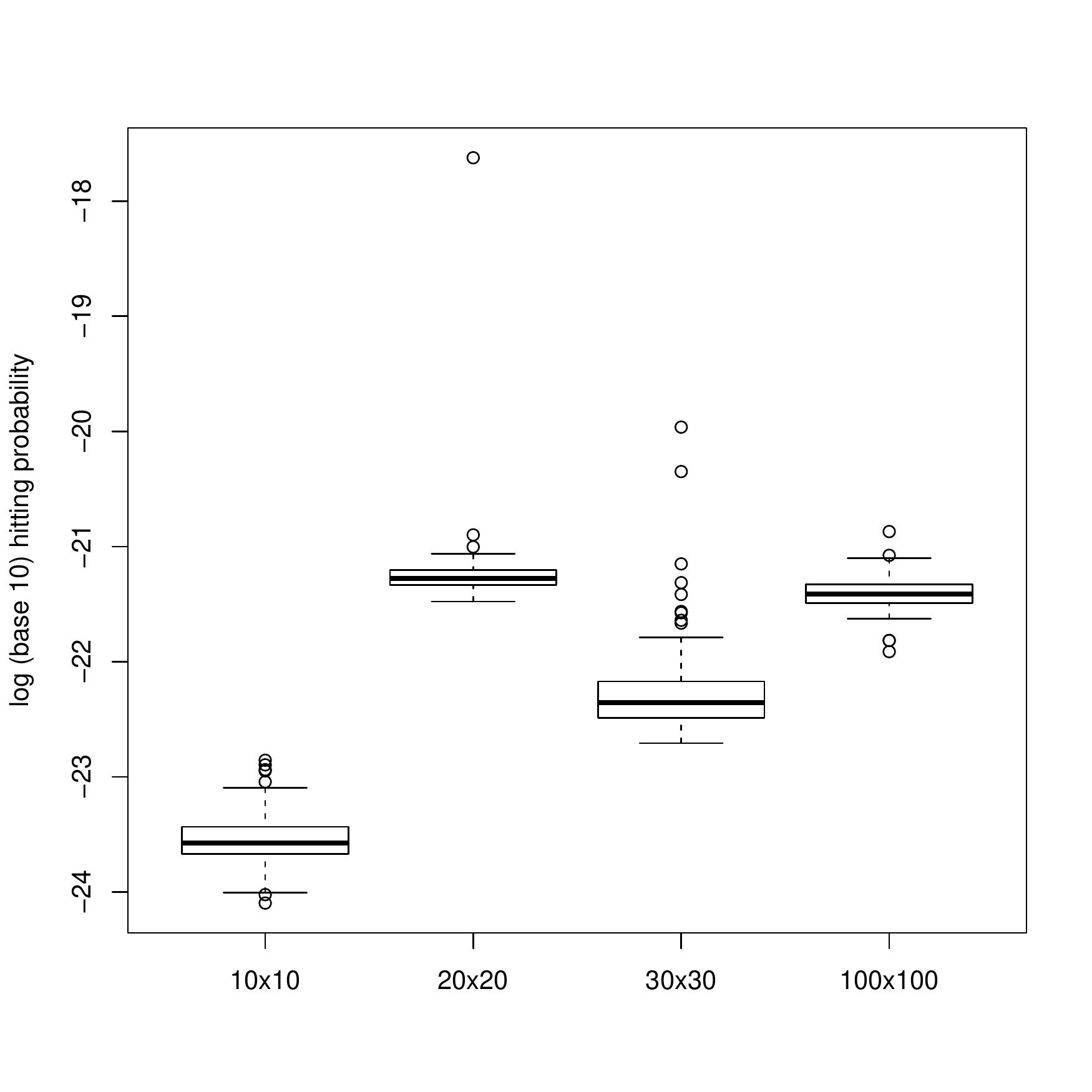}
\caption{Distributions of the maximum likelihood value across 100 iterations as a function of grid size in a nearest neighbour network. The model parameters are $\alpha = 1$, $\beta = 12$, $\gamma = 1$, $\varepsilon = 10^{-4}$ for the $10 \times 10$ and $20 \times 20$ grids, $\varepsilon = 10^{-5}$ for the $30 \times 30$ grid and $\varepsilon = 10^{ -6}$ for the $100 \times 100$ grid, so that $| V | \varepsilon \approx 10^{-2}$. Each simulation used 30 000 particles for per replicate run times of 40, 43, 44, and 89 seconds, respectively, on an Intel i5-2520M 2.5 GHz processor.}
\label{sis_figure}
\end{figure}

\begin{table}[!ht]
\centering
\begin{tabular}{c|c|c | c}
$10 \times 10$ & $20 \times 20$ & $30 \times 30$ & $100 \times 100$ \\
\hline
(7, 3): 57 & (7, 8): 87 & (8, 7): 74 & (77, 50): 98\\
(6, 3): 35 & (8, 7): 9 & (8, 8): 24 & (77, 49) : 1 \\
(5, 3): 6 & (7, 7): 3 & (8, 6): 1 & (77, 51) : 1\\
(8, 3): 1 & (8, 8): 1 & (9, 7): 1 & True: (77, 49)\\
(7, 2): 1 & True: (7, 7) & True: (7, 6)\\
True: (7, 2)

\end{tabular}
\caption{Lists of locations and counts of maximum likelihood estimators of the initial infected position from the experiments described in Figure \ref{sis_figure}. Shown are the row and column labels with an $n \times n$ network represented as $(1 : n) \times (1 : n)$.}
\label{locations}
\end{table}
To conclude this section, Figure \ref{sis_surf} shows a representative likelihood surface for the initial location of the depicted observed epidemic.
This relatively small network already illustrates that the distribution of the initial location is not homogeneous among observed infected locations, and that locations that are no longer infected cannot necessarily be ruled out.
\begin{figure}[!ht]
\centering
\includegraphics[width = 0.49 \linewidth]{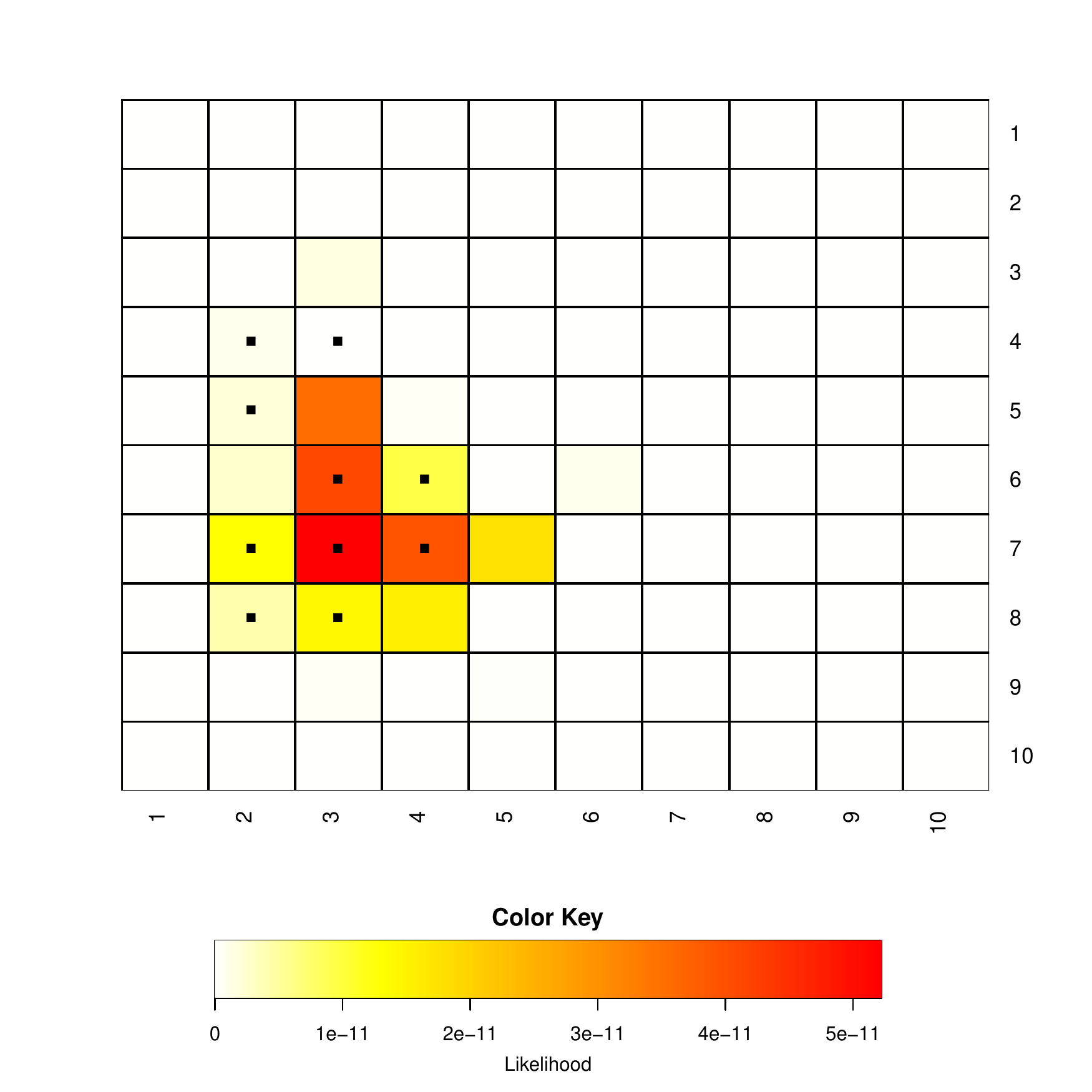}
\caption{A representative simulated likelihood surface for the initial infection position in the 10 $\times$ 10 network. The model parameters are as described in Figure \ref{sis_figure}.}
\label{sis_surf}
\end{figure}

\section{Discussion}\label{discussion}

We have presented a general framework for designing SMC proposal distributions which proceed backwards in time.
Time-reversal makes it straightforward to ensure realisations of the process hit desired regions of the state space, essentially irrespective of the probability assigned to them by the law of the process of interest.
Even the extreme case of conditioning paths on a terminal point of probability 0 can be dealt with easily.
This makes time-reversal a natural and efficient choice when the end point of a path is known with high accuracy, but its initial distribution is flat.
As most existing rare event and path simulation algorithms make the opposite assumptions about initial and terminal conditions, we expect time-reversal to be a useful tool in extending the scope of simulation-based inference and computation.
Our simulation results support this expectation by demonstrating that simple reverse-time SMC schemes outperform a state-of-the-art, adaptive multilevel splitting scheme in settings similar to those described above, and remain feasible for problems for which multilevel splitting requires complex fine tuning.

Expressing the law of the reverse-time process via Nagasawa's formula \eqref{nagasawa} often leads to a substantial reduction in the dimensionality of the design task of defining a proposal distribution (c.f.~Proposition \ref{green_prop}).
The difficulty of designing efficient proposal distributions in high dimension is a central barrier to practical SMC, so this cancellation of dimensions is an important advantage.
Furthermore, we emphasize that it is not inherently linked to time-reversal: re-weighting jump probabilities by an approximate stationary distribution and cancelling out common coordinates would lead to a forwards-in-time proposal defined by low dimensional approximate CSDs.
For rare terminal conditions a reverse-time approach is easier because the conditional and unconditional dynamics share the qualitative behaviour of rapidly leaving the rare state and moving towards a stationary mode.
A forwards-in-time algorithm would have to use CSDs approximating the behaviour of an appropriate Doob's $h$-transform in order to drive the process away from modes and into the rare state.
Nevertheless, we believe analogues of Proposition \ref{green_prop} to be a useful design tool beyond the scope of this paper.

All three example simulations considered in this paper have had the property that the proposal distribution could be normalised numerically, so that proposals could be sampled via standard methods.
This property is computationally convenient, but is often not necessary since importance weights typically only need to be evaluated up to a normalising constant.
Our reverse-time framework inherits this advantage from generic sequential Monte Carlo without added difficulty.
Another advantages which reverse-time SMC inherits from standard SMC are that algorithms are straightforward to parallelise: 
use of GPUs for parallel Monte Carlo simulations has been found to speed up computations by up to 500 fold in comparison to the serial simulations shown in this paper \citep{Lee10}.
Further gains in efficiency can be made by reducing the required number of independent simulations through driving values \citep{Griffiths94c} or bridge sampling \citep{Meng96}.
Such speed up, combined with the facts that
\begin{enumerate}
\item time reversal is advantageous in settings where forwards-in-time methods struggle, and 
\item does not require a reaction coordinate \eqref{reaction_coord} which can be difficult design in practice and is frequently necessary to implement forwards-in-time methods,
\end{enumerate}
leads us to expect that a reverse time perspective can render many previously intractable problems amenable to practical computations.

\section*{Acknowledgements}
The authors are grateful to Adam Johansen and Murray Pollock for fruitful conversations on rare event simulation, and to Ayalvadi Ganesh for pointing out the problem of inferring initial locations on networks.
Jere Koskela was supported by EPSRC as part of the MASDOC DTC at the University of Warwick. Grant No. EP/HO23364/1. Paul Jenkins is supported in part by EPSRC grant EP/L018497/1.

\bibliographystyle{plainnat}
\bibliography{bibliography}  

\end{document}